\newcommand{\be}{\begin{equation}}
\newcommand{\en}{\end{equation}}
\newcommand{\bea}{\begin{eqnarray}}
\newcommand{\ena}{\end{eqnarray}}
\newcommand{\beano}{\begin{eqnarray*}}
\newcommand{\enano}{\end{eqnarray*}}
\newcommand{\bee}{\begin{enumerate}}
\newcommand{\ene}{\end{enumerate}}
\newcommand{\Lc}{{\cal L}}
\newcommand{\Sc}{{\cal S}}
\newcommand{\1}{1 \!\! 1}
\newtheorem{thm}{Theorem}
\newenvironment{proof}{\noindent {\bf Proof:}}{\hfill$\Box$}
\begin{document}
\thispagestyle{empty}

\vspace*{1cm}

\begin{center}
{\Large \bf Fourier transforms, fractional derivatives, and a little bit of quantum mechanics}   \vspace{2cm}\\

{\large F. Bagarello}
\vspace{3mm}\\[0pt]
Dipartimento di Ingegneria, Scuola Politecnica,\\[0pt]
Universit\`{a} di Palermo, I - 90128 Palermo,  and\\
\, INFN, Sezione di Napoli, Italy.\\[0pt]
E-mail: fabio.bagarello@unipa.it\\[0pt]
home page: www1.unipa.it/fabio.bagarello \vspace{8mm}\\[0pt]
\end{center}

\vspace*{0.5cm}

\begin{abstract}
We discuss some of the mathematical properties of the fractional derivative defined by means of Fourier transforms. We first consider its action  on the set of test functions $\Sc(\mathbb R)$, and then we extend it to its dual set, $\Sc'(\mathbb R)$, the set of tempered distributions, provided they satisfy some mild conditions. We discuss some examples, and we show how our definition can be used in a quantum mechanical context.
\end{abstract}

\vspace{2cm}

{\bf Keywords}:  Fractional derivatives; Fourier transforms; Fractional momentum operator

\vfill

\newpage

\section{Introduction}

Since many years a lot of people have been interested in defining a fractional version of the derivative, and work with it. The interest was both purely mathematical and for possible applications to physics. A very partial list of references includes the monographes or edited books \cite{hilfer}-\cite{laskinbook}, and the following papers, which are among the few which adopt fractional derivatives of different kind in quantum mechanics, \cite{namias}-\cite{wei2}. This is the only kind of applications we will consider here. We refer to the bibliographies in the books cited above for more references and for more applications to different realms of physics and engineering.

It is well known that it does not exist a unique definition of the fractional derivative. In fact, there are many of them,  introduced by different people, in different contexts, and having different peculiarities. Some well known definitions are due to J. Liouville, G. F. B. Riemann, N. Y. Sonin, and, more recently, to M. Caputo. It may be worth stressing that different definitions may give rise to different results when computing the derivatives of the same function, even if all possible extensions return the standard outputs  when "fractional"  is replaced by "usual": they all extend, in different ways, the ordinary derivative.

As we have already stressed, not many applications of fractional derivatives to quantum mechanics can be found in the literature, except those cited above:
 in \cite{laskin1} the author proposes his general view to what he calls {\em fractional quantum mechanics}, while in \cite{laskin2} the same author proposes a fractional version of the Schr\"odinger equation. In \cite{matos} the author discusses an application of fractional derivatives to quantum mechanics, analysing the possibility of defining a fractional momentum operator in one dimension, while in \cite{wei1,wei2}, among other topics, the author discusses the Heisenberg uncertainty relation in this extended settings. A more systematic use of fractional derivatives in quantum mechanics is discussed in \cite{herr}.

What is particularly interesting for us is the possibility of extending the canonical commutation relations to the case in which the ladder operators involved extend the standard ones. In \cite{baginbagbook} this idea has been carried out considering ladder operators which, contrarily to what happens for ordinary bosonic operators, are not related by the adjoint operation. Here we want to consider a slightly different extension, replacing the space derivative appearing in the coordinate representation of the creation and annihilation operators\footnote{These are the operators $\frac{1}{\sqrt{2}}\left(x-\frac{d}{dx}\right)$ and $\frac{1}{\sqrt{2}}\left(x+\frac{d}{dx}\right)$, respectively. } with a fractional derivative. This will be done in Section \ref{sect5}, where other aspects of the fractional momentum operators will be also discussed, including the related uncertainty relation. The definition of fractional derivative is given in Section \ref{sect2}, together with some properties. The definition we use essentially coincides with that considered in \cite{herr,tseng} and \cite{vigue}, for instance. With respect to  \cite{herr,tseng}, we will be more concerned with the mathematical aspects of the fractional derivative. Also, rather than adopting a {\em complex analysis} approach as in \cite{vigue}, we will use ideas coming from functional analysis and distribution theory. This is the content of Section \ref{sect2}. Examples and applications of our general results are given in Section \ref{sect3}, while, as stated,  Section \ref{sect5} contains the physical consequences of our definition. In Section \ref{sect6} we give our conclusions and our plans for the future.

\section{Definition and first properties}\label{sect2}

The idea of our approach is very simple and it is directly connected with a well known property of the Fourier transform. Let $f(x)\in\Sc(\mathbb R)$. Its Fourier transform $\hat f(p)=\frac{1}{\sqrt{2\pi}}\int_{\mathbb R}\,e^{-ipx}f(x)dx$ belongs to $\Sc(\mathbb R)$ as well, and satisfies the following equality:
$$
F\left[\frac{d^nf(x)}{dx^n}\right](p)=(ip)^n\hat f(p),
$$
for all $n=0,1,2,3,\ldots$. Here we are adopting the (standard) notation: $F[g](p)=\hat g(p)$, for all $g(x)$ which admits Fourier transform. Hence $F^{-1}$ indicates the inverse Fourier transform, so that
\be
\frac{d^nf(x)}{dx^n}=F^{-1}\left[(ip)^n\hat f(p)\right]=\frac{1}{\sqrt{2\pi}}\int_{\mathbb R}\,e^{ipx}(ip)^n\hat f(p)dp,
\label{21}\en
and the integral is surely well defined.
This formula can be generalized to define the fractional derivative $D^{\alpha}$, $\alpha\in\mathbb{R}_+$ as follows:
\be
D^\alpha f(x)=F^{-1}\left[(ip)^\alpha\hat f(p)\right]=\frac{1}{\sqrt{2\pi}}\int_{\mathbb R}\,e^{ipx}(ip)^\alpha\hat f(p)dp,
\label{22}\en
whenever this integral exists\footnote{Here and in the following, to fix uniquely the value of $i^\alpha$, we take $i^\alpha=\cos\left(\frac{\alpha\pi}{2}\right)+i\sin\left(\frac{\alpha\pi}{2}\right)$.}, see \cite{herr,tseng,vigue}.  This is what the next theorem is about. 

\begin{thm}
	If $f(x)\in\Sc(\mathbb{R})$ then $D^\alpha f(x)$ is well defined and belongs to $\Lc^\infty(\mathbb R)\cap\Lc^2(\mathbb R)$ for all $\alpha\geq0$.
\end{thm}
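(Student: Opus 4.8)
The plan is to reduce the entire statement to the rapid decay of $\hat f$, via the elementary but decisive observation that $|(ip)^\alpha|=|p|^\alpha$ for every real $p$ and every $\alpha\ge0$, irrespective of the sign of $p$. With the branch fixed in the footnote, for $p>0$ one has $(ip)^\alpha=|p|^\alpha e^{i\alpha\pi/2}$ and for $p<0$ one has $(ip)^\alpha=|p|^\alpha e^{-i\alpha\pi/2}$, so the modulus is $|p|^\alpha$ in both cases; at $p=0$ the factor vanishes for $\alpha>0$ and equals $1$ for $\alpha=0$. Thus $p\mapsto(ip)^\alpha$ is measurable (indeed continuous), and its phase never enters the size estimates.

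First I would check that the integral in \eqref{22} converges absolutely for each $x$. Since $f\in\Sc(\mathbb R)$ forces $\hat f\in\Sc(\mathbb R)$, the function $\hat f$ is rapidly decreasing, so $|p|^\alpha|\hat f(p)|$ is bounded and integrable on $\mathbb R$ for every $\alpha\ge0$. As $|e^{ipx}(ip)^\alpha\hat f(p)|=|p|^\alpha|\hat f(p)|$, the integrand is dominated by this fixed integrable function, independent of $x$; hence $D^\alpha f(x)$ is well defined. Estimating under the integral sign then gives the $\Lc^\infty$ bound for free:
\[
|D^\alpha f(x)|\le\frac{1}{\sqrt{2\pi}}\int_{\mathbb R}|p|^\alpha|\hat f(p)|\,dp=:C_\alpha<\infty,
\]
uniformly in $x$, so $D^\alpha f\in\Lc^\infty(\mathbb R)$ with $\|D^\alpha f\|_\infty\le C_\alpha$.

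For membership in $\Lc^2(\mathbb R)$ I would appeal to Plancherel's theorem. Writing $g(p)=(ip)^\alpha\hat f(p)$, the rapid decay of $\hat f$ gives $|g(p)|^2=|p|^{2\alpha}|\hat f(p)|^2\in\Lc^1(\mathbb R)$, that is $g\in\Lc^2(\mathbb R)$, and we have already seen that $g\in\Lc^1(\mathbb R)$. Since $g\in\Lc^1(\mathbb R)\cap\Lc^2(\mathbb R)$, the pointwise integral in \eqref{22} defining $F^{-1}[g]$ agrees almost everywhere with the value produced by the unitary $\Lc^2$-Fourier transform, and Plancherel then yields $D^\alpha f=F^{-1}[g]\in\Lc^2(\mathbb R)$ with $\|D^\alpha f\|_2=\|g\|_2<\infty$. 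The only point needing a little care, rather than a genuine obstacle, is exactly this reconciliation of the two meanings of $F^{-1}$ — the defining integral and the $\Lc^2$ isometry — which is precisely what $g\in\Lc^1(\mathbb R)\cap\Lc^2(\mathbb R)$ secures; everything else is governed by the Schwartz decay of $\hat f$, which makes all the relevant integrals finite for each $\alpha\ge0$.
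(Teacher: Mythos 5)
Your proof is correct and follows essentially the same route as the paper: both set $g_\alpha(p)=(ip)^\alpha\hat f(p)$, use the rapid decay of $\hat f\in\Sc(\mathbb R)$ to get $g_\alpha\in\Lc^1(\mathbb R)\cap\Lc^2(\mathbb R)$, and then read off the $\Lc^\infty$ bound from the $\Lc^1$ membership and the $\Lc^2$ membership from Plancherel. You simply make explicit the details the paper leaves implicit (the identity $|(ip)^\alpha|=|p|^\alpha$, the uniform domination, and the reconciliation of the pointwise integral with the $\Lc^2$ inverse Fourier transform), which is a welcome tightening but not a different argument.
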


\begin{proof}
	Since $f(x)\in\Sc(\mathbb{R})$, its Fourier transform  $\hat f(p)$ is also in $\Sc(\mathbb{R})$. Let $g_\alpha(p)=(ip)^\alpha \hat f(p)$. Since $|g_\alpha(p)|^2=|p|^{2\alpha} |\hat f(p)|^2$, it is clear that $g_\alpha(p)\in\Lc^1(\mathbb R)\cap \Lc^2(\mathbb R)$. Then, according to the definition, $D^\alpha f(x)$ is the inverse Fourier transform of $g_\alpha(p)$. This implies that $D^\alpha f(x)\in\Lc^\infty(\mathbb R)\cap\Lc^2(\mathbb R)$, as we had to prove.
	
\end{proof}

\vspace{2mm}

{\bf Remarks:--} (1) $D^0$ is the identity operator on $\Sc(\mathbb{R})$. In fact, from (\ref{22}), $D^0f(x)=F^{-1}\left[\hat f(p)\right]=f(x)$, for all $f(x)\in\Sc(\mathbb{R})$. Moreover, Plancherel's theorem states that the equality $D^0=\1$ can be extended to $\Lc^2(\mathbb{R})$.

(2) Similar arguments show  that $g_\alpha(p)\in\Lc^r(\mathbb R)$ for all $r\geq1$; 

(3) It is also clear that, if $\alpha$ takes integer values $n=0, 1, 2, 3$, and so on, $D^\alpha$ coincides with the standard derivative: $D^n=\frac{d^n}{dx^n}$.

\vspace{2mm}

The fractional derivative $D^\alpha$ on a product of test functions, $f(x)g(x)$, $f(x), g(x)\in\Sc(\mathbb R)$, can be computed using the fact that $F[fg](p)=\frac{1}{\sqrt{2\pi}}(\hat f\ast \hat g)(p)$. Simple computations show that
\be
D^\alpha(fg)(x)=\frac{i^\alpha}{2\pi}\int_{\mathbb R} e^{isx}\hat g(s)\left(\int_{\mathbb R} e^{iqx}\hat f(q)(s+q)^\alpha\,dq\right)ds,\label{23}\en
which, in particular, returns $D^n(fg)(x)=\frac{d^n}{dx^n}(fg)(x)$, for $n=0,1,2,\ldots$. Hence, formula (\ref{23}) extends the standard Leibnitz rule for derivatives of products. For non integer $\alpha$, the computation is much harder and there is no particularly easy way to compute the result, since the binomial expansion of $(s+q)^\alpha$  depends on the ratio $\frac{s}{q}$, which of course changes when computing the integrals.

In view of our applications in Section \ref{sect5}, it is interesting to prove the following equality:
\be
\left<D^\alpha f,g\right>=(-1)^\alpha \left< f, D^\alpha g\right>,
\label{23b}\en
for all $f(x)$ and $g(x)$ in $\Sc(\mathbb R)$, and for all $\alpha\geq0$. In fact we have
$$
\left<D^\alpha f,g\right>=\left<F^{-1}\left[(ip)^\alpha\hat f(p)\right],F^{-1}\left[\hat g(p)\right]\right>=\left<(ip)^\alpha\hat f(p),\hat g(p)\right>,
$$
where we have used the well known Parceval equality for $\Lc^2$-functions. Analogously we find
$$
\left< f,D^\alpha g\right>=\left<F^{-1}\left[\hat f(p)\right],F^{-1}\left[(ip)^\alpha\hat g(p)\right]\right>=\left<\hat f(p),(ip)^\alpha\hat g(p)\right>,
$$
from which (\ref{23b}) follows.

\vspace{2mm}

The operator $D^\alpha$, defined on $\Sc(\mathbb{R})$, can be extended to a larger set of functions, at least in a weak sense. The idea comes from distribution theory. It is known that the ordinary derivative $\frac{d}{dx}$, and its powers, can be extended to the set $\Sc'(\mathbb R)$  of tempered distributions by duality: for all $f(x)\in\Sc(\mathbb{R})$ and $F(x)\in\Sc'(\mathbb{R})$, one can define $\frac{d^n}{dx^n}F(x)$ using
\be\left<\frac{d^n}{dx^n}F,f\right>=(-1)^n\left<F,\frac{d^n}{dx^n}f\right>,
\label{24}\en
for all $n=0,1,2,3,\ldots$. Here $\left<.,.\right>$ is the form which puts in duality $\Sc(\mathbb R)$
and $\Sc'(\mathbb R)$, and it extends the ordinary scalar product in $\Lc^2(\mathbb R)$. It is well known that, with this definition, $\frac{d^n}{dx^n}F\in\Sc'(\mathbb R)$: the weak derivatives of a tempered distributions are also tempered distributions. Following this strategy, we define
\be\left<D^\alpha\Psi,f\right>:=(-1)^\alpha\left<\Psi,D^\alpha f\right>,
\label{25}\en
for all $\alpha\geq0$, $\Psi(x)\in\Lc^2(\mathbb R)$ and $f(x)\in\Sc(\mathbb R)$. This is well defined, since $\Psi(x), D^\alpha f(x)\in\Lc^2(\mathbb{R})$. Notice that this definition extends formula (\ref{23b}) to pairs of functions which are not both in $\Sc(\mathbb R)$.
 We refer to \cite{samko} for a rather rich analysis of the connections between fractional derivatives and distribution theory. This definitions produce the following interesting continuity result:

\begin{thm}
	If $\{f_n(x)\in\Sc(\mathbb{R})\}$ is a sequence of test functions such that $\tau_\Sc-\lim_{n,\infty}f_n(x)=f(x)$, $f(x)\in\Sc(\mathbb{R})$, then $\left<D^\alpha\Psi,f_n\right>\rightarrow \left<D^\alpha\Psi,f\right>$, for all $\Psi(x)\in\Lc^2(\mathbb R)$.
\end{thm}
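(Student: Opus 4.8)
The plan is to reduce the statement to an $\Lc^2$-convergence fact about the fractional derivatives of the test functions, and then to transport everything to the Fourier side. First I would use the defining duality (\ref{25}): since $\left<D^\alpha\Psi,f_n\right>=(-1)^\alpha\left<\Psi,D^\alpha f_n\right>$ and the same holds with $f$ in place of $f_n$, it suffices to show $\left<\Psi,D^\alpha f_n\right>\rightarrow\left<\Psi,D^\alpha f\right>$. Here $\Psi\in\Lc^2(\mathbb R)$ and, as established in the first theorem, each $D^\alpha f_n$ (and $D^\alpha f$) lies in $\Lc^2(\mathbb R)$, so the Cauchy--Schwarz inequality gives $\left|\left<\Psi,D^\alpha f_n-D^\alpha f\right>\right|\leq\|\Psi\|_2\,\|D^\alpha f_n-D^\alpha f\|_2$. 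Hence the whole claim follows once I prove $\|D^\alpha f_n-D^\alpha f\|_2\rightarrow0$.

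Next I would pass to the Fourier transform. Setting $g_n(p)=\hat f_n(p)-\hat f(p)$, the definition (\ref{22}) together with Plancherel's theorem yields $\|D^\alpha f_n-D^\alpha f\|_2=\big\|\,|p|^\alpha g_n(p)\big\|_2$, so the target becomes $\int_{\mathbb R}|p|^{2\alpha}|g_n(p)|^2\,dp\rightarrow0$. Since the Fourier transform is a homeomorphism of $\Sc(\mathbb R)$ onto itself, the hypothesis $\tau_\Sc\text{-}\lim f_n=f$ implies $\tau_\Sc\text{-}\lim\hat f_n=\hat f$, that is, $g_n\rightarrow0$ in every Schwartz seminorm $\|g\|_{k,m}=\sup_p|p^k g^{(m)}(p)|$, with $k,m$ nonnegative integers.

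The core estimate is then obtained by splitting the integral at $|p|=1$. On $|p|\leq1$ one has $|p|^{2\alpha}\leq1$ (this is where $\alpha\geq0$ enters), so that part is bounded by $2\,(\sup_p|g_n(p)|)^2=2\,\|g_n\|_{0,0}^2\rightarrow0$. On the tail $|p|>1$ I would fix an integer $N$ with $2N-2\alpha>1$ and write $|p|^{2\alpha}|g_n(p)|^2=|p|^{-(2N-2\alpha)}\,|p|^{2N}|g_n(p)|^2\leq|p|^{-(2N-2\alpha)}\,\|g_n\|_{N,0}^2$; since $\int_{|p|>1}|p|^{-(2N-2\alpha)}\,dp$ is a finite constant, this portion is at most a constant times $\|g_n\|_{N,0}^2\rightarrow0$. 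Summing the two contributions gives $\int_{\mathbb R}|p|^{2\alpha}|g_n|^2\,dp\rightarrow0$, and the proof is complete.

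The only genuinely delicate point is this last step. Because $\alpha$ need not be an integer, the weight $|p|^{2\alpha}$ is not a polynomial, and so it cannot be dominated by a single Schwartz seminorm of $g_n$ uniformly on all of $\mathbb R$. The remedy is exactly the splitting at $|p|=1$, handling the possible blow-up of $|p|^{-2\alpha}$ near the origin separately from the growth of $|p|^{2\alpha}$ at infinity, and choosing $N$ large enough that the polynomial decay carried by $\|g_n\|_{N,0}$ both dominates $|p|^{2\alpha}$ and leaves an integrable tail. Everything else is a routine combination of Plancherel's theorem, the Cauchy--Schwarz inequality, and the continuity of the Fourier transform on $\Sc(\mathbb R)$.
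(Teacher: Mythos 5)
Your proposal is correct and follows essentially the same route as the paper's own proof: reduce via the duality (\ref{25}) and the Cauchy--Schwarz inequality to showing $\|D^\alpha f_n-D^\alpha f\|_2\rightarrow0$, pass to the Fourier side by Plancherel, and split the integral at $|p|=1$, with uniform convergence of $\hat f_n$ handling the bounded region and a high-power Schwartz seminorm supplying an integrable tail for $|p|>1$. The only difference is bookkeeping: you factor out $\|g_n\|_{N,0}^2$ with $2N-2\alpha>1$ directly, whereas the paper runs an equivalent $\epsilon$-argument with the power $l+1$, $l\geq\alpha$ an integer, to gain the integrable factor $|p|^{-2}$.
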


\begin{proof}
	First we recall that $\tau_\Sc-\lim_{n,\infty}f_n(x)=f(x)$ means that $\lim_{n,\infty}\sup_{x\in\mathbb{R}}|x^l(f_n^{(k)}(x)-f^{(k)}(x)|=0$, for all $k,l=0,1,2,3,\ldots$. Now, to prove the statement, it is enough to check that $\|D^\alpha f_n-D^\alpha f\|\rightarrow0$ for $n\rightarrow\infty$, where $\|.\|$ is the norm in $\Lc^2(\mathbb{R})$. In fact, in this case, we have
	$$
	\left|\left<D^\alpha\Psi,f_n\right>-\left<D^\alpha\Psi,f\right>\right|=\left|\left<\Psi,D^\alpha f_n\right>-\left<\Psi,D^\alpha f\right>\right|\leq \|\Psi\|\|D^\alpha f_n-D^\alpha f\|\rightarrow0
	$$
	which is what we have to prove. Using formula (\ref{22}), simple computations show that
	$$
	\|D^\alpha f_n-D^\alpha f\|^2=\int_{\mathbb R}
|p|^{2\alpha}\left|\hat f_n(p)-\hat f(p)\right|^2\,dp.	$$
Now, since $f_n(x)$ is $\tau_\Sc$-convergent to $f(x)$, $\hat f_n(p)=F[f_n](p)$ converges to $\hat f(p)=F[f](p)$ in the same topology. Hence,
$$
\lim_{n,\infty}|p|^l\left(\hat f_n^{(k)}(p)-\hat f^{(k)}(p)\right)=0,
$$
uniformly in $p$, for all $l,k=0,1,2,3,\ldots$. Now, let us write $\|D^\alpha f_n-D^\alpha f\|^2$ as the following sum of integrals:
 $$
 \|D^\alpha f_n-D^\alpha f\|^2=\int_{|p|\leq1}
 |p|^{2\alpha}\left|\hat f_n(p)-\hat f(p)\right|^2\,dp+$$
 $$+\int_{|p|>1}
 |p|^{2\alpha}\left|\hat f_n(p)-\hat f(p)\right|^2\,dp=I_n^{(|p|\leq1)}+I_n^{(|p|>1)},	$$
with obvious notation. Since $I_n^{(|p|\leq1)}\leq \int_{|p|\leq1}
\left|\hat f_n(p)-\hat f(p)\right|^2\,dp,$ and since, in particular, $\hat f_n(p)$ converges uniformly to $\hat f(p)$, it is clear that $I_n^{(|p|\leq1)}\rightarrow0$ for $n\rightarrow\infty$. To prove that the same holds true for $I_n^{(|p|>1)}$, we first observe that, for all $l\geq \alpha$, $l\in\mathbb{N}$, 
$$
|p|^{2\alpha}\left|\hat f_n(p)-\hat f(p)\right|^2\leq |p|^{2l}\left|\hat f_n(p)-\hat f(p)\right|^2,
$$
since $|p|>1$. The right-hand side of this inequality converges to zero, uniformly in $p$, for each integer value of $l$. Then, in particular, $\forall\epsilon>0$ there exists $n_\epsilon(l+1)\in\mathbb{N}$ such that $|p|^{2(l+1)}\left|\hat f_n(p)-\hat f(p)\right|^2\leq\epsilon$, for all $n\geq n_\epsilon(l+1)$ and for all $p\in\mathbb{R}$. We use $n_\epsilon(l+1)$ to emphasize the fact that the index depends not only on $\epsilon$, but also on the particular power of $|p|$. However, this dependence is not particularly relevant for us, since $l$ is a fixed integer in our estimate. With this in mind we see that, for all $p\in\mathbb{R}$,
$$
|p|^{2\alpha}\left|\hat f_n(p)-\hat f(p)\right|^2\leq |p|^{2l}\left|\hat f_n(p)-\hat f(p)\right|^2\leq \frac{\epsilon}{|p|^2},
$$
and therefore
$$
I_n^{(|p|>1)}\leq \int_{|p|>1}|p|^{2l}\left|\hat f_n(p)-\hat f(p)\right|^2dp\leq 2\epsilon\int_1^\infty \frac{dp}{p^2}=2\epsilon,
$$
which must be true for $n$ sufficiently large. How large $n$ should be depends, of course, on $l$. But for each fixed $l$, we can conclude that $\lim_{n,\infty} I_n^{(|p|>1)}=0$. This, together with the analogous conclusion on $I_n^{(|p|\leq1)}$, allows us to conclude the proof.

\end{proof}

\vspace{2mm}

It is possible to extend further the definition of $D^\alpha$ to other functions (or even to distributions) which are not necessarily in $\Lc^2(\mathbb R)$. Let $G(x)$ be such that its Fourier transform exists (also in a distributional sense, if needed), and suppose $G(x)$ satisfies the following existence requirement:
\be
I_\alpha=\int_{\mathbb R}\overline{\hat G(p)}\,(ip)^\alpha \hat f(p)\,dp
\label{25bis}\en
exists for all $f(x)\in\Sc(\mathbb{R})$ and (at least) for some $\alpha\geq0$. Then we can extend (\ref{25}) as follows:
\be\left<D^\alpha G,f\right>:=(-1)^\alpha\left<G,D^\alpha f\right>,
\label{26}\en
for those $\alpha$ and $G(x)$, and for all $f(x)\in\Sc(\mathbb R)$. In fact, using (\ref{22}) and the Parceval equality (extended to these functions), we have
$$
\left<G,D^\alpha f\right>=\left<F[G],F[D^\alpha f]\right>=\left<\hat G,(ip)^\alpha \hat f\right>=I_\alpha,
$$ 
which exists by assumption. Hence, (\ref{26}) makes sense, since the right-hand side is well defined.

 \vspace{2mm}
 
\noindent{\bf Remark:--} It is easy to find conditions for $I_\alpha$ to be well defined. It is sufficient that $\overline{\hat G(p)}\,(ip)^\alpha$ is a tempered distribution, i.e., an element of $\Sc'(\mathbb{R})$. This is granted, for instance, if $\alpha=0,1,2,3,\ldots$ and if $G(x)\in\Sc'(\mathbb{R})$. We will go back to the existence of $I_\alpha$ in the next section, for specific examples.

\vspace{2mm}

Another interesting aspect of the operator $D^\alpha$ is given by the following theorem, stating essentially that, if $\alpha$ is {\em close} to an integer $n$, then $D^{\alpha}f(x)$ is {\em close} to $D^nf(x)$, for suitable functions. 

\begin{thm}\label{th3}
	Given any $\{f(x)\in\Sc(\mathbb{R})\}$, for all fixed $n=0,1,2,3,\ldots$, the sequence $D^{n+\frac{1}{k}}f(x)$ converges  to $D^{n}f(x)$, when $k\rightarrow\infty$ uniformly in $x$.
\end{thm}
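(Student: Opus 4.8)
The plan is to work directly with the integral representation (\ref{22}) and to exploit the fact that the exponential factor $e^{ipx}$ has modulus one, so that any bound obtained on the integrand is automatically uniform in $x$. Writing
$$
D^{n+\frac1k}f(x)-D^nf(x)=\frac{1}{\sqrt{2\pi}}\int_{\mathbb R}e^{ipx}\left[(ip)^{n+\frac1k}-(ip)^n\right]\hat f(p)\,dp,
$$
I would first estimate
$$
\left|D^{n+\frac1k}f(x)-D^nf(x)\right|\leq\frac{1}{\sqrt{2\pi}}\int_{\mathbb R}\left|(ip)^{n+\frac1k}-(ip)^n\right|\,|\hat f(p)|\,dp,
$$
the right-hand side being independent of $x$. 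Hence it suffices to show that this last integral tends to zero as $k\to\infty$; the uniform convergence in $x$ is then an immediate consequence.

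The key simplification comes from factoring out $(ip)^n$. Using the convention fixed for $i^\alpha$, one checks that $(ip)^{n+\frac1k}=(ip)^n(ip)^{\frac1k}$ and that $|(ip)^n|=|p|^n$, so that
$$
\left|(ip)^{n+\frac1k}-(ip)^n\right|=|p|^n\left|(ip)^{\frac1k}-1\right|.
$$
For each fixed $p\neq0$ we have $|p|^{1/k}\to1$ and the phase $e^{\pm i\pi/(2k)}\to1$, whence $(ip)^{1/k}\to1$ and the integrand $|p|^n\left|(ip)^{1/k}-1\right|\,|\hat f(p)|$ converges pointwise to zero (the single point $p=0$ being irrelevant, since it has measure zero).

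It then remains to justify passing the limit under the integral sign, and this is the step that requires a little care. I would produce a $k$-independent integrable majorant: since $|p|^{1/k}\leq1+|p|$ for all $p$ and all $k\geq1$, we get $\left|(ip)^{1/k}-1\right|\leq|p|^{1/k}+1\leq2+|p|$, and therefore the integrand is dominated by $\left(2|p|^n+|p|^{n+1}\right)|\hat f(p)|$. Because $\hat f(x)\in\Sc(\mathbb R)$, this majorant is itself a Schwartz function, hence it belongs to $\Lc^1(\mathbb R)$. The dominated convergence theorem then yields that the integral tends to zero, which completes the argument. The only genuine obstacle is the construction of this majorant, i.e. controlling $|p|^{1/k}$ uniformly in $k$ for large $|p|$; once the elementary bound $|p|^{1/k}\leq1+|p|$ is in hand, everything else is routine, and one could alternatively mimic the $|p|\leq1$ / $|p|>1$ splitting used in the previous theorem in place of dominated convergence.
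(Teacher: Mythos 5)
Your proof is correct and follows essentially the same route as the paper: the same decomposition of the difference, pointwise convergence of the integrand, and dominated convergence with an integrable, $k$- and $x$-independent majorant built from the elementary bound $\left|(ip)^{1/k}-1\right|\leq |p|^{1/k}+1$. Your majorant $(2|p|^n+|p|^{n+1})|\hat f(p)|$ differs only cosmetically from the paper's $|p|^n\rho(p)|\hat f(p)|$ (where $\rho(p)=2$ for $|p|\leq 1$ and $2|p|$ otherwise), and your explicit reduction to an $x$-independent integral, together with the remark that the point $p=0$ is negligible, is if anything a slightly more careful write-up of the same argument.
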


\begin{proof}
	First we observe that
	$$
	D^{n+\frac{1}{k}}f(x)-D^{n}f(x)=\frac{1}{\sqrt{2\pi}}\int_{\mathbb R}h_k^{[n,x]}(p)\,dp,
	$$
	where 
	$$
	h_k^{[n,x]}(p)=e^{ipx}(ip)^n\left((ip)^{\frac{1}{k}}-1\right)\hat f(p).
	$$
It is clear that, for all fixed $p$, $h_k^{[n,x]}(p)\rightarrow0$ when $k$ diverges. This is true for all values of $x\in\mathbb{R}$ and for all fixed $n\in\mathbb{N}$. It is further easy to find a positive function $\Phi^{[n]}(p)\in\Lc^1(\mathbb{R})$ such that $|h_k^{[n,x]}(p)|\leq \Phi^{[n]}(p)$ almost everywhere in $p$. Hence, from the Lebesgue dominated convergence theorem our claim follows.

To find the function $\Phi^{[n]}(p)$ we first observe that $$\left|(ip)^{\frac{1}{k}}-1\right|\leq |p|^{\frac{1}{k}}+1\leq \left\{
\begin{array}{ll}
2\qquad\qquad\,\, |p|\leq1, \\
2|p|\qquad\quad \mbox{elsewhere}
\end{array}
\right.$$
If we call $\rho(p)$ the function in the right-hand side of this inequality, we find that
$$
|h_k^{[n,x]}(p)|\leq |p|^n\rho(p)|\hat f(p)|=:\Phi^{[n]}(p),
$$
which is clearly integrable in $\mathbb{R}$.

\end{proof}

We conclude this section stressing that, even if the definition in (\ref{22}) was already known in the literature, not many existence and continuity results can still be found, as those given here. For this reason we believe  that this paper add something to the existence literature on the topic.

\section{Examples}\label{sect3}

This section is devoted to the analysis of some examples of the definition in (\ref{22}). We will also discuss some cases in which the function we want to derive is not in $\Sc(\mathbb R)$, so that we need to use formulas (\ref{25}) or (\ref{26}). It may be useful to stress that, when compared with \cite{tseng}, our approach is much more {\em mathematically oriented}, since we are not interested in getting purely formal results. 

\subsection{Example  1: $f_1(x)=e^{-x^2}$}

We first observe that $f_1(x)\in\Sc(\mathbb R)$. Its Fourier transform is again a Gaussian: $\hat f_1(p)=\frac{e^{-p^2/4}}{\sqrt{2}}$. Hence, $g_\alpha(p)=(ip)^\alpha \hat f_1(p)=(ip)^\alpha\,\frac{ e^{-p^2/4}}{\sqrt{2}}$, and (\ref{22}) becomes
$
D^\alpha f(x)=\frac{1}{2\,\sqrt{\pi}}\,J_\alpha,
$
where
$$
J_\alpha=\int_{\mathbb{R}}\,e^{ipx-p^2/4}(ip)^\alpha dp,
$$
which can be computed in terms of Gamma and Confluent Hypergeometric functions. We get
\be
D^\alpha f_1(x)=\frac{2^{\,\alpha}}{\sqrt{\pi}}\,\biggl[\cos\left(\frac{\alpha\pi}{2}\right)\Gamma\left(\frac{1+\alpha}{2}\right) {}_1F_1\left(\frac{1+\alpha}{2},\frac{1}{2},-x^2\right)-
$$
$$
-\alpha x\sin\left(\frac{\alpha\pi}{2}\right)\Gamma\left(\frac{\alpha}{2}\right) {}_1F_1\left(1+\frac{\alpha}{2},\frac{3}{2},-x^2\right)\biggr].
\label{31}\en
This result reflects a similar one in Table 5.1, \cite{herr}.
Now, since $\Gamma\left(\frac{1}{2}\right)=\sqrt{\pi}$, $\Gamma\left(\frac{3}{2}\right)=\frac{\sqrt{\pi}}{2}$, ${}_1F_1\left(\frac{1}{2},\frac{1}{2},-x^2\right)={}_1F_1\left(\frac{3}{2},\frac{3}{2},-x^2\right)=e^{-x^2}$, ${}_1F_1\left(\frac{3}{2},\frac{1}{2},-x^2\right)=(1-2x^2)e^{-x^2}$ and ${}_1F_1\left(\frac{5}{2},\frac{3}{2},-x^2\right)=\left(1-\frac{2x^2}{3}\right)e^{-x^2}$, it is simple to check explicitly that $D^nf_1(x)=f_1^{(n)}(x)$, at least for $n=0,1,2,3$, where $f_1^{(k)}(x)$ is the $k-$th ordinary derivative of $f_1(x)$. Of course, we expect no surprise for $\alpha$ integer and larger than 3. Figure \ref{fig1} shows the behaviour of $D^{\alpha}f_1(x)$ for different values of $\alpha$ between 0 and 1: the plots show clearly that, in complete agreement with Theorem \ref{th3}, the more $\alpha$ approaches zero, the closer the result is to $f_1(x)$. 

\begin{figure}[ht]
	\begin{center}
		\includegraphics[width=0.90\textwidth]{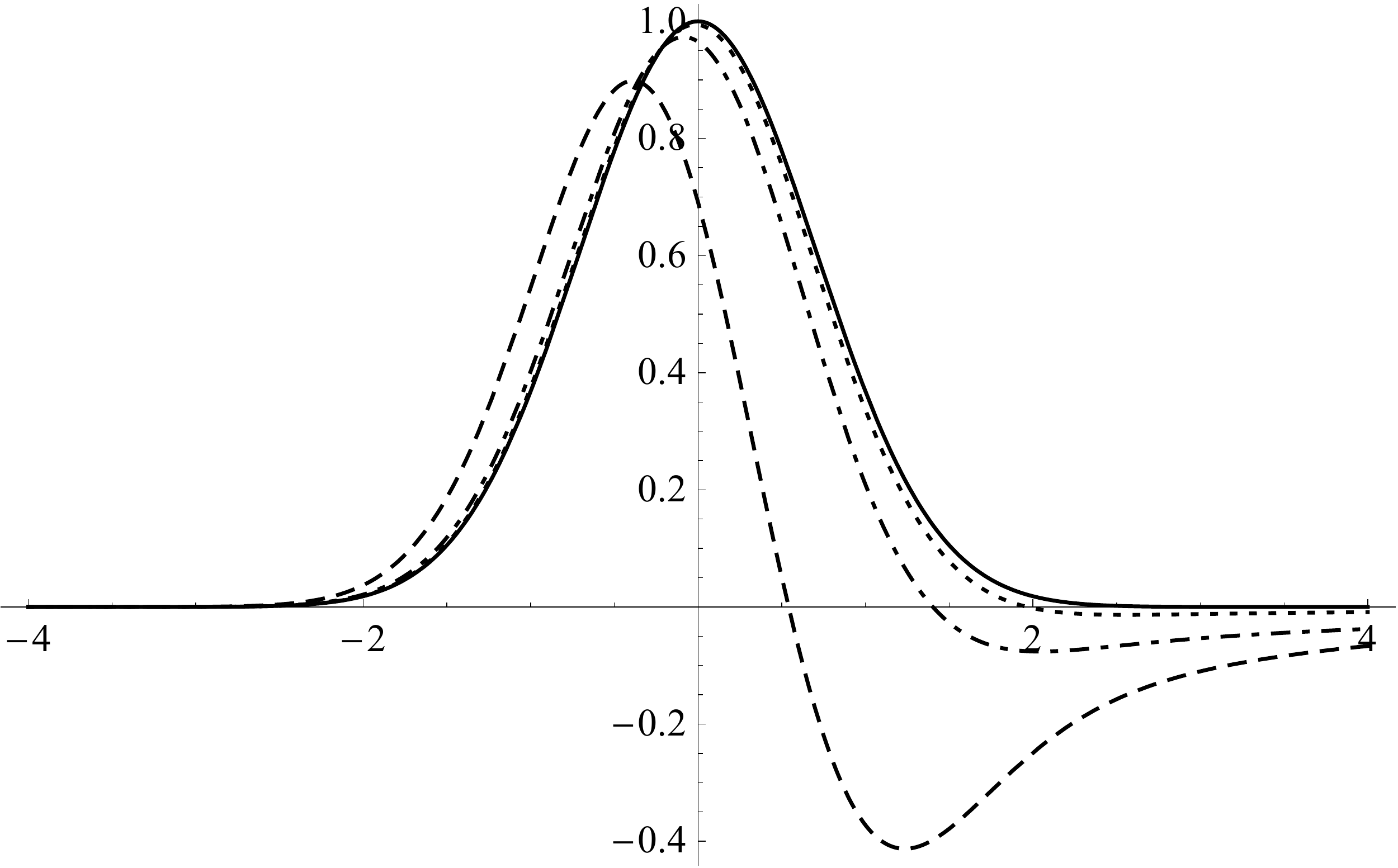}\hspace{8mm} %
	\end{center}
	\caption{{$D^\alpha f_1(x)$ for $\alpha=0$ (continuos line), $\alpha=\frac{1}{50}$ (dotted line), $\alpha=\frac{1}{10}$ (dotted-dashed line), and $\alpha=\frac{1}{2}$ (dashed line).}}
	\label{fig1}
\end{figure}

In Figure \ref{fig2} we compare what happens when $\alpha=5$ and when $\alpha$ approaches $5$ from below or from above. In particular we consider $\alpha=4.5, 4.9, 5, 5.1, 5.5$.

\begin{figure}[ht]
	\begin{center}
		\includegraphics[width=0.90\textwidth]{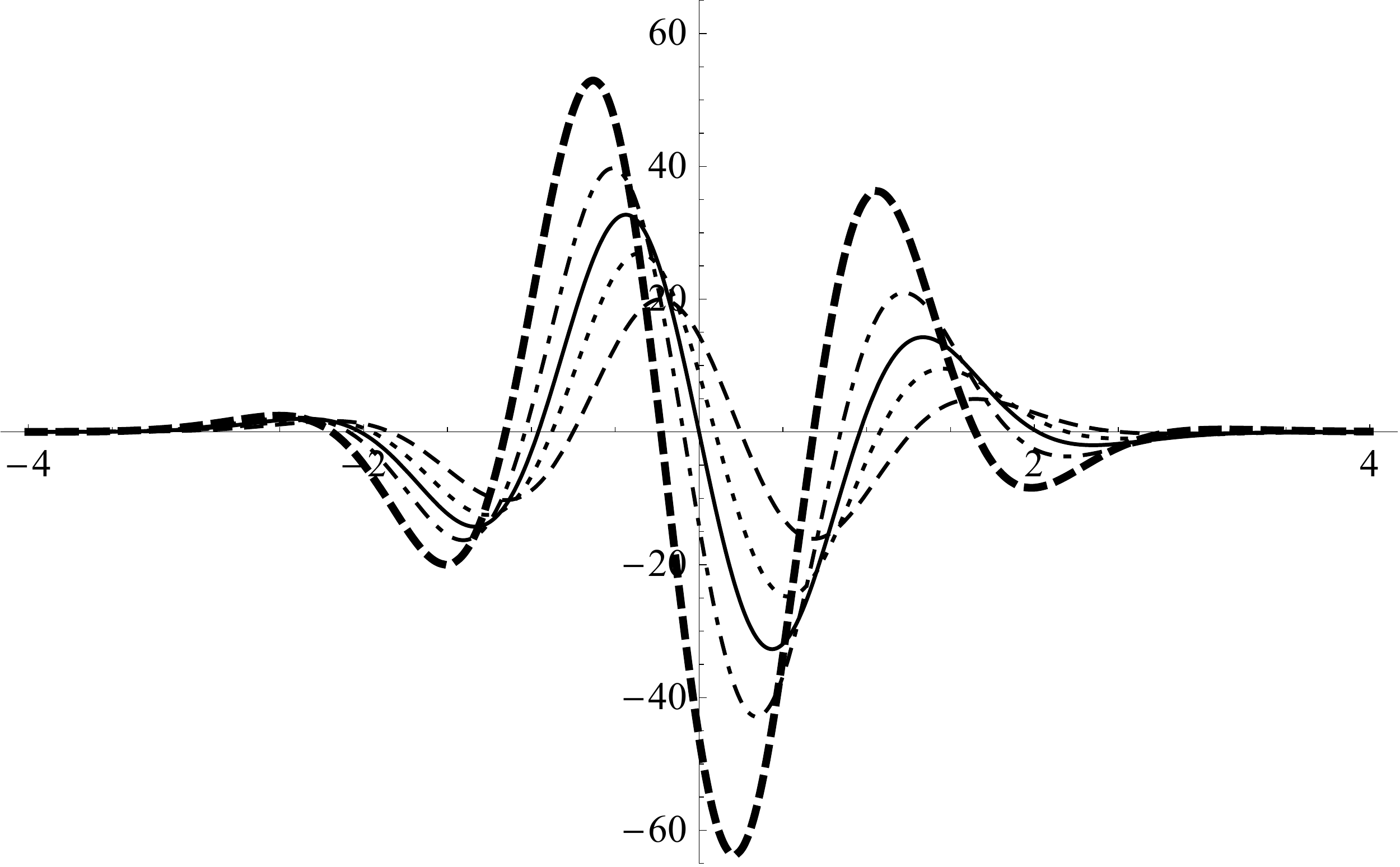}\hspace{8mm} %
	\end{center}
	\caption{{$D^\alpha f_1(x)$ for $\alpha=4.5$ (dashed line), $\alpha=4.8$ (dotted line), $\alpha=5$ (continuous line),  $\alpha=5.2$ (dotted-dashed line) and $5.5$ (thick-dashed line).}}
	\label{fig2}
\end{figure}
Both figures clearly show that, moving away from an integer value of $\alpha$, the fractional derivative in (\ref{22}) changes, and it approaches the {\em standard} result when the value of $\alpha$ is closer to the integer value. This is again an evidence of what discussed in Theorem \ref{th3}.  It is also interesting to observe that the original parity of the two derivatives, $D^0f_1(x)$ and $D^5f_1(x)$, is lost when $\alpha\neq0,5$: for instance, while $D^0f_1(x)$ is an even function, $D^\alpha f_1(x)$ is not, if $\alpha=\frac{1}{50}, \frac{1}{10}, \frac{1}{2}$.

\subsection{Example  2: $f_2(x)=x^2e^{-x^2}$}

This example is close to the previous one, in the sense that, also in this case, $f_2(x)\in\Sc(\mathbb R)$. Its Fourier transform is $\hat f_2(p)=\frac{(2-p^2)e^{-p^2/4}}{4\sqrt{2}}$, so that
$$
D^\alpha f_2(x)=\frac{1}{8\,\sqrt{\pi}}\,K_\alpha,
$$
where
$$
K_\alpha=\int_{\mathbb{R}}\,e^{ipx-p^2/4}(ip)^\alpha(2-p^2) dp.
$$
We find
\be
D^\alpha f_2(x)=\frac{2^{\,\alpha-2}}{\sqrt{\pi}}\,\biggl[\left(i^{\alpha}+(-i)^{\alpha}\right)  \Gamma \left(\frac{1+\alpha}{2}\right) \biggl(\, _1F_1\left(\frac{1+\alpha}{2},\frac{1}{2},-x^2\right)-(1+\alpha)\, _1F_1\left(\frac{3+\alpha}{2},\frac{1}{2},-x^2\right)\biggr)-
$$
$$
-2i\left((-i)^{\alpha}-i^{\alpha}\right)\,x  \Gamma \left(1+\frac{\alpha}{2}\right)\biggl(\, _1F_1\left(\frac{2+\alpha}{2},\frac{3}{2},-x^2\right)-(2+\alpha)\, _1F_1\left(\frac{4+\alpha}{2},\frac{3}{2},-x^2\right)\biggr)\biggr],
\label{32}\en
for all $\alpha\geq0$. Once again, it is possible to check that this result returns the ordinary derivatives for $\alpha=0,1,2,3,\ldots$. Figure \ref{fig3} shows the behaviour of $D^\alpha f_2(x)$ for values of $\alpha$ approaching zero: we see that, the closer $\alpha$ is to zero, the more $D^\alpha f_2(x)$ goes to $f_2(x)$, see Theorem \ref{th3}. Again, while $f_2(x)$ is even, $D^\alpha f_2(x)$ is not, if $\alpha$ is close, but not equal, to zero.

\begin{figure}[ht]
	\begin{center}
		\includegraphics[width=0.90\textwidth]{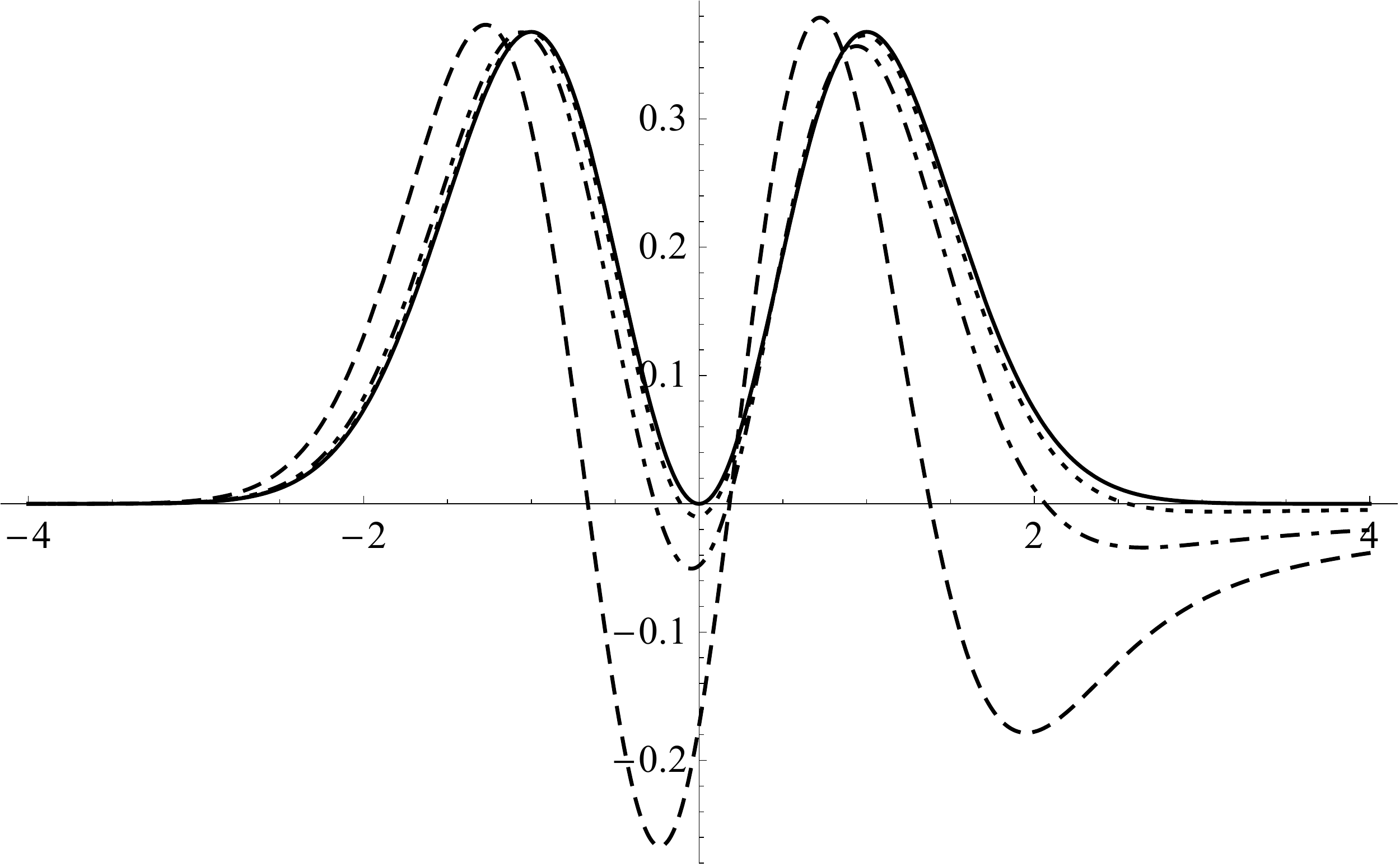}\hspace{8mm} %
	\end{center}
	\caption{{$D^\alpha f_2(x)$ for $\alpha=0$ (continuos line), $\alpha=\frac{1}{50}$ (dotted line), $\alpha=\frac{1}{10}$ (dotted-dashed line), and $\alpha=\frac{1}{2}$ (dashed line).}}
	\label{fig3}
\end{figure}

\subsection{Example 3: $f_3(x)=e^{x}$}

This example extends what is done previously, since $f_3(x)\notin\Sc(\mathbb{R})$, even if it is a $C^\infty$-function. Of course, $f_3(x)$ is not even an $\Lc^2(\mathbb{R})$-function. Nevertheless, using the same results in Section \ref{sect2} in a distributional sense, we can still compute the fractional derivative of $e^x$, just checking that $I_\alpha$ in (\ref{25bis}) is well defined. We first observe that the Fourier transform of $f_3(x)$ exists as a distribution: $\hat f_3(p)=\sqrt{2\pi}\,\delta(p+i)$. We refer to \cite{brew,lindell,smagin}, and to references therein, for possible definitions of the generalized Dirac delta distribution with complex argument, and some of their properties. Therefore, recalling that $g(x)\delta(x-x_0)=g(x_0)\delta(x-x_0)$ for all continuous function $g(x)$ (in a weak sense, even if $x_0\in\mathbb{C}$), $g_\alpha(p)=\sqrt{2\pi}\,(ip)^\alpha\,\delta(p+i)=\sqrt{2\pi}\,\delta(p+i)$. We see that $g_\alpha(p)$ does not depend on $\alpha$ anymore. Using formula (\ref{22}) we deduce that
$$
D^\alpha f_3(x)=\frac{1}{\sqrt{2\pi}}\int_{\mathbb R}e^{ipx}g_\alpha(p)\,dp=\int_{\mathbb R}e^{ipx}\delta(p+i)\,dp=e^x,
$$
which shows that not only all the standard integer derivatives of $e^x$ return $e^x$ itself, as they should, but also the fractional ones.

It is interesting to check that $f_3(x)$ satisfies the condition for the existence of fractal derivative as derived at the end of Section \ref{sect2}. In other words, $I_\alpha$ in (\ref{25bis}) is finite. In fact, in this case we have
$$
I_\alpha=\int_{\mathbb R}\overline{\hat f_3(p)}\,(ip)^\alpha \hat f(p)\,dp=\sqrt{2\pi}\,\int_{\mathbb R}\delta(p-i)\,(ip)^\alpha \hat f(p)\,dp=\sqrt{2\pi} (-1)^\alpha \hat f(i),
$$
which is clearly finite (in modulus), since $\hat f(p)\in\Sc(\mathbb{R})$.

It is easy to generalize this example by replacing $f_3(x)$ with ${\bf f}_3(x)=e^{kx}$. In this case ${\bf\hat f_3}(p)=\sqrt{2\pi}\,\delta(p+ik)$, and  $g_\alpha(p)=\sqrt{2\pi}\,(ip)^\alpha\,\delta(p+ik)=\sqrt{2\pi}\,k^\alpha\delta(p+ik)$, whose inverse Fourier transform returns 
$D^\alpha {\bf f_3}(x)=F^{-1}\left[g_\alpha(p)\right]=k^\alpha e^{kx}$, which is exactly the result assumed by Liouville in his treatment of the fractional derivatives, see also Table 5.1 in \cite{herr}.

\subsection{Example 4: $f_4(x)=x^n$}\label{example4}

This is, in fact, a class of functions, none of which belonging to $\Sc(\mathbb{R})$. Once again, therefore, our computations must be understood in the generalized sense described in Section \ref{sect2}. This can be done since  the integral  $I_\alpha$ in (\ref{25bis}) is finite for certain values of $\alpha$. 

In particular, let us see what happens for $n=1$. In this case, $f_4(x)=x$ and $\hat f_4(p)=i\sqrt{2\pi}\,\delta'(p)$, where $\delta'(p)$ is the weak derivative of $\delta(p)$. Then
$$
I_\alpha=\int_{\mathbb R}\overline{\hat f_4(p)}\,(ip)^\alpha \hat f(p)\,dp=i\sqrt{2\pi}\,\int_{\mathbb R}\frac{d}{dp}\delta(p)\,(ip)^\alpha \hat f(p)\,dp=-i\sqrt{2\pi}\,\int_{\mathbb R}\delta(p)\,\frac{d}{dp}\left((ip)^\alpha \hat f(p)\right)\,dp.
$$
It is easy to see then that, for instance, $I_\alpha=0$ if $\alpha>1$, while $I_1=\sqrt{2\pi}\hat f(0)$. In both cases, $I_\alpha$ is finite, so that, as discussed at the end of Section \ref{sect2}, the fractional derivative of $f_4(x)=x$ can be defined, at least if $\alpha=0$ (which is obvious), or when $\alpha\geq1$.

For generic $n$ the Fourier transform of $f_4(x)$ is  $\hat f_4(p)=i^n\sqrt{2\pi}\,\delta^{(n)}(p)$. Of course, here the derivative of the $\delta(p)$ is the weak one. Then $g_\alpha(p)=\sqrt{2\pi}\,(ip)^\alpha\,i^n\,\delta^{(n)}(p)$, and from (\ref{22}) we get
\be
D^\alpha f_4(x)=i^n\int_{\mathbb R}e^{ipx}(ip)^\alpha\,\delta^{(n)}(p)\,dp,
\label{33}\en
for those $\alpha\geq0$ for which the integral exists, and for all $n=0,1,2,3,\ldots$. Of course, we expect that the range of allowed values of $\alpha$ depends on our choice of $n$. This will appear clear in the rest of the section.  

Let us see what happens for some fixed values of $n$, beginning with $n=0$. In this case, $f_4(x)=1$, and we expect that all its derivatives (except for $\alpha=0$) are zero. In fact, from (\ref{33}), if $n=0$ we get
$
D^\alpha f_4(x)=\int_{\mathbb R}e^{ipx}(ip)^\alpha\,\delta(p)\,dp
$, which is equal to 1 if $\alpha=0$, and returns 0 for all other positive values of $\alpha$. Hence, if $n=0$, the fractional derivative of $f_4(x)=1$ exists for all $\alpha\geq0$.

Let us now fix $n=1$. In this case, using the definition of the weak derivative,
$$
D^\alpha f_4(x)=i\int_{\mathbb R}e^{ipx}(ip)^\alpha\,\delta^{(1)}(p)\,dp=-i\int_{\mathbb R}\frac{d}{dp}\left(e^{ipx}(ip)^\alpha\right)\,\delta(p)\,dp.
$$
Once again, we have to distinguish between the case $\alpha=0$ and $\alpha>0$. When $\alpha=0$ we easily get $$D^0 f_4(x)=-i\int_{\mathbb R}\frac{d}{dp}\left(e^{ipx}\right)\,\delta(p)\,dp=x\int_{\mathbb R}e^{ipx}\,\delta(p)\,dp=x,$$
as it should. If $\alpha>0$, after simple computations we get
$$
D^\alpha f_4(x)=-\alpha(i)^{\alpha+1}\int_{\mathbb R}e^{ipx}p^{\alpha-1}\delta(p)dp.
$$
Notice that, according to our preliminary analysis of $I_\alpha$, this integral exists if $\alpha\geq1$. In particular, $D^\alpha f_4(x)=1$ if $\alpha=1$, while $D^\alpha f_4(x)=0$ if $\alpha>1$.

We next take $n=2$. Hence $f_4(x)=x^2$. Because of the definition of weak derivative, we have
$$
D^\alpha f_4(x)=i^2\int_{\mathbb R}e^{ipx}(ip)^\alpha\,\delta^{(2)}(p)\,dp=-(-1)^2\int_{\mathbb R}\frac{d^2}{dp^2}\left(e^{ipx}(ip)^\alpha\right)\,\delta(p)\,dp.
$$
Simple computations show that $D^0 f_4(x)=0$, $D^1 f_4(x)=2x$, $D^2 f_4(x)=2$, and  $D^n f_4(x)=0$ for $n=3,4,5,\ldots$, as expected. 
For fractional $\alpha$, the situation is the following: since
$$
\frac{d^2}{dp^2}\left(e^{ipx}(ip)^\alpha\right)=e^{ipx}\left(-\alpha(\alpha-1)(ip)^{\alpha-2}-2\alpha x(ip)^{\alpha-1}-x^2(ip)^\alpha\right),
$$
it is clear that $D^\alpha f_4(x)$ is surely well defined if $\alpha\geq2$. In particular, $D^\alpha f_4(x)=2$ if $\alpha=2$, as already derived, while $D^\alpha f_4(x)=0$ for all $\alpha>2$, as it is reasonable.

\vspace{2mm}

It would be interesting to understand if the range of admissible values of $\alpha$ arising from our analysis are {\em the largest possible}, or if something can be done even  in the regions where, for instance, $I_\alpha$ is not defined. This aspect of our analysis is part of our future plans.

\section{The fractional momentum operator}\label{sect5}

Our applications of definition (\ref{22}) to quantum mechanics is very preliminary. We will propose here a fractional version of the momentum operator, finding its (generalized) eigenstates, and we will compute its commutator with the position operator, studying then  the uncertainty relation arising from this commutator. Also, we will briefly introduce fractional bosonic annihilation and creation operators.

The starting point is the following definition:
\be
P_\alpha=\left(-i D\right)^\alpha,
\label{51}\en
where $\alpha\geq0$. If $\alpha=1$ this is exactly the well known momentum operator. To compute its commutator with the position operator, $x$, we observe that
$$
F[xf(x)](p)=i\hat f'(p),
$$
for all $f(x)\in\Sc(\mathbb{R}$). Hence we have
$$
[D^\alpha,x]f(x)=D^\alpha(xf(x))-xD^\alpha(f(x))=\frac{1}{\sqrt{2\pi}}\int_{\mathbb R}(ip)^\alpha e^{ipx}\left(i\hat f'(p)\right)dp-\frac{x}{\sqrt{2\pi}}\int_{\mathbb R}(ip)^\alpha e^{ipx}\hat f(p)\,dp
$$
which after some easy computations produces the following interesting result:
\be
[D^\alpha,x]f(x)=\alpha (D^{\alpha-1}f)(x).
\label{52}\en
This equation is well defined for $\alpha=0$ (which is obvious, since $D^0=\1$), and for $\alpha\geq1$. We see that there is a range of values of $\alpha$, $\alpha\in]0,1[$, in which the meaning of (\ref{52})  is not evident. This is similar to what we have already seen in Section \ref{example4}. In particular, formula (\ref{52}) returns the standard result $[\frac{d^n}{dx^n},x]=n\frac{d^{n-1}}{dx^{n-1}}$, for all $n=0,1,2,3,\ldots$, with the understanding that the RHS of this formula is zero when $n=0$.

\vspace{2mm}

{\bf Remark:--} We refer for instance to \cite{adams} for some useful discussion on the case $\alpha\in]0,1[$.

\vspace{2mm}

Going back to the fractional momentum operator $P_\alpha$, from (\ref{52}) we deduce that
\be [x,P_\alpha]=i\alpha P_{\alpha-1},\label{53}\en
for $\alpha=0$ and $\alpha\geq1$. In particular, $[x,P_0]=0$, which is consistent with the fact that $P_0$ is just the identity operator $\1$. Also, if $\alpha=1$ we get $[x,P_1]=i\1$, which is the well known result for the commutator between position and momentum operators in quantum mechanics. It is possibly useful to stress that formula (\ref{53}) must be understood, as in (\ref{52}), acting on a suitable function $f(x)$, which as before we assume belonging to $\Sc(\mathbb{R})$.

If we consider the following linear combinations of $x$ and $P_\alpha$, $A_\alpha=\frac{x+iP_\alpha}{\sqrt{2}}$ and $B_\alpha=\frac{x-iP_\alpha}{\sqrt{2}}$, we deduce that
\be
[A_\alpha,B_\alpha]=\alpha P_{\alpha-1},
\label{54} \en
which is again well defined for $\alpha=0$ and for $\alpha\geq1$. In particular, if $\alpha=1$ we know that $P_1=P_1^\dagger$, and therefore $B_\alpha=A_\alpha^\dagger$. Hence (\ref{54}) returns the standard commutation relation for bosonic operators. Otherwise we get something different, which could be worth of a deeper analysis. Once again, (\ref{54}) must be understood in the sense of unbounded operators, in principle: both sides of the equation must act on functions of some suitable domain, as those in $\Sc(\mathbb{R})$, for instance.

Let us go back to (\ref{53}). Here $x$ is self-adjoint and $P_\alpha$ satisfies (\ref{57}) below, which suggests that $P_\alpha$ is self-adjoint as well (which is surely true at least if $\alpha=0,1,2,3,\ldots$). We want to see then what the Heisenberg uncertainty principle implies for these operators. For that we recall that, if $A$ and $B$ are two self-adjoint operators such that $[A,B]=iC$, then $\Delta A\,\Delta B\geq \frac{|\left<C\right>|}{2}$. Here, for a generic self-adjoint operator $X$, we have put
$$
\left<X\right>=\left<\varphi,X\varphi\right>, \qquad (\Delta X)=\left<\varphi,(X-\left<X\right>)^2\varphi\right>,
$$
where $\varphi(x)$ is a fixed normalized function in $\Lc^2(\mathbb{R})$ for which these quantities are well defined. Then, at least for those values of $\alpha\geq1$ for which $P_\alpha$ is self-adjoint, we should have
\be
\Delta x\,\Delta P_\alpha\geq \alpha \frac{|\left<P_{\alpha-1}\right>|}{2}.
\label{58}
\en
In particular, if $\alpha=1$, this inequality becomes the well known $\Delta x\,\Delta P_1\geq  \frac{1}{2}$, since $P_0=\1$. Let us now see what the right-hand side of (\ref{58}) becomes for a particular choice of $\varphi(x)$. To fix the ideas, we take $\varphi(x)=\left(\frac{2}{\pi}\right)^{1/4}e^{-x^2}$, which is normalized in $\Lc^2(\mathbb{R})$ and is proportional to $f_1(x)$ introduced before. Using (\ref{31}), and the fact that
$$
\int_{\mathbb R}e^{-x^2}{}_1F_1\left(\frac{\alpha}{2},\frac{1}{2},-x^2\right)=\sqrt{\frac{\pi}{2^\alpha}}, \quad\mbox{and}\quad \int_{\mathbb R}e^{-x^2}\,x\,{}_1F_1\left(\frac{1+\alpha}{2},\frac{3}{2},-x^2\right)=0,
$$
we deduce that
\be
\Delta x\,\Delta P_\alpha\geq \alpha\frac{2^{(\alpha-3)/2}}{\sqrt{\pi}}\Gamma\left(\frac{\alpha}{2}\right)\left|\cos\left(\frac{(\alpha-1)\pi}{2}\right)\right|,
\label{59}
\en
which returns the standard result if $\alpha=1$. Figure \ref{fig4} shows the behaviour of the right-hand side (RHS) of this inequality for the interval  $\alpha\in[0,6]$: we see that this quantity is zero for $\alpha=2k$, $k=0,1,2,3,\ldots$, but we also see that this quantity can be very large for some values of $\alpha$. For these values becomes impossible to know simultaneously $x$ and $P_\alpha$, without getting a large uncertainty. It is worth to stress that the relevant part of the plot in Figure \ref{fig4} is that for $\alpha\geq1$, since for $\alpha<1$ formula (\ref{53}) makes no sense.

\begin{figure}[ht]
	\begin{center}
		\includegraphics[width=0.80\textwidth]{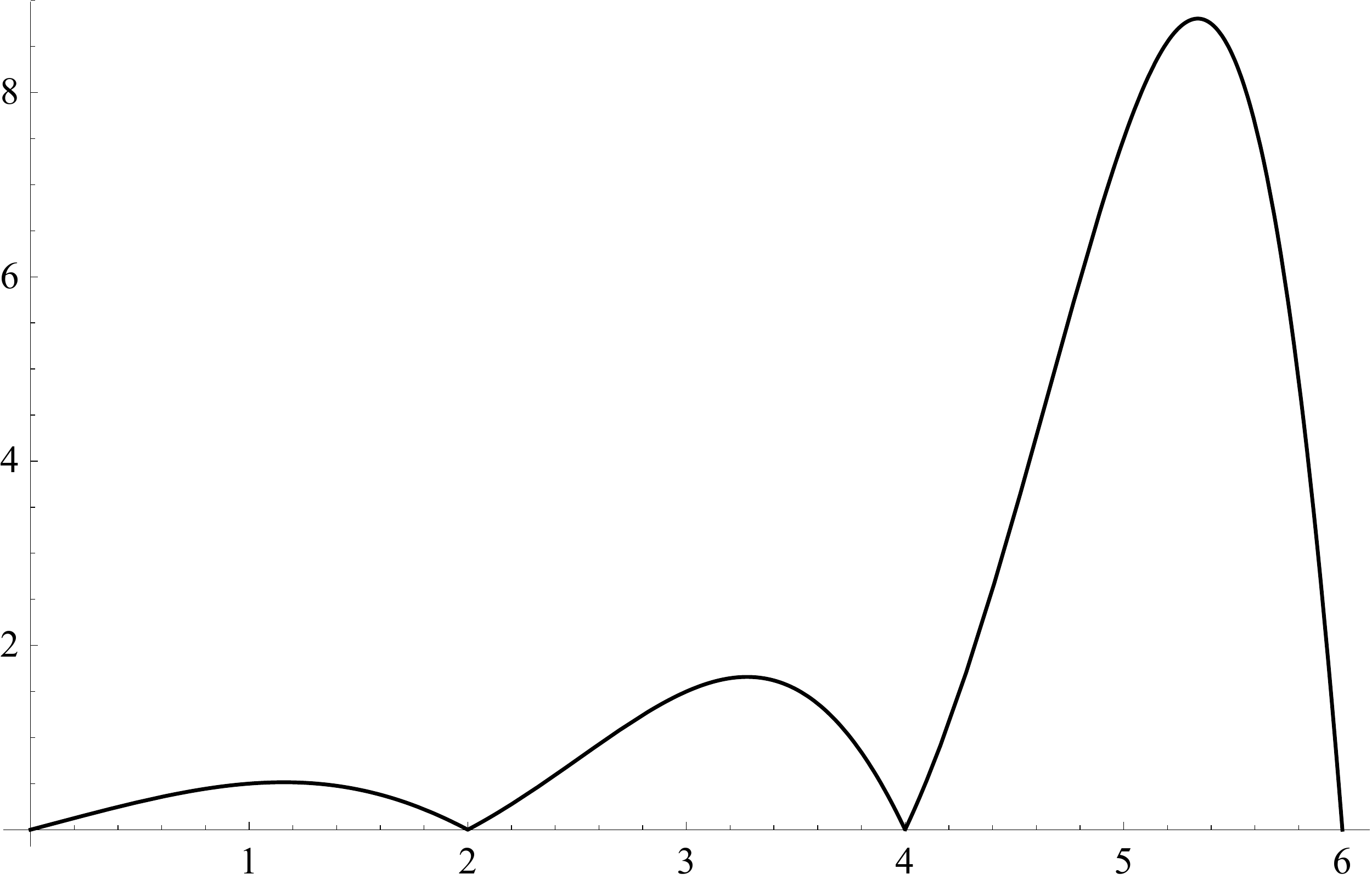}\hspace{8mm} %
	\end{center}
	\caption{The RHS of equation (\ref{59}) as a function of $\alpha$.}
	\label{fig4}
\end{figure}

The next question we want to address is the nature of the eigenstates of $P_\alpha$, if any. It is well known, for instance, see \cite{bag1,bag2} and references therein, that $P_1$ admits plane waves as eigenfunctions, and these are not square integrable. We will show here that the same conclusion can be deduced also for $\alpha\neq1$. 

Let us call $f_\alpha(x)$ the eigenstate of $P_\alpha$ with eigenvalue $E_\alpha$: $(P_\alpha f_\alpha)(x)=E_\alpha f_\alpha(x)$. Using (\ref{22}) this can be rewritten, after some algebra, as
$$
F^{-1}\left[p^\alpha \hat f_\alpha(p)\right]=E_\alpha f_\alpha(x)=E_\alpha F^{-1}\left[\hat f_\alpha(p)\right].
$$
It is natural to look for a solution $f_\alpha(x)$ satisfying the equality $p^\alpha \hat f_\alpha(p)=E_\alpha \hat f_\alpha(p)$. This equation admits the  weak solution $\hat f_\alpha(p)=N_\alpha\delta(p^\alpha-E_\alpha)$, where $N_\alpha$ is some proportionality constant. Then, 
\be
f_\alpha(x)=\frac{1}{\sqrt{2\pi}}\int_{\mathbb{R}} e^{ipx}\hat f_\alpha(p)dp=\frac{N_\alpha}{\sqrt{2\pi}}\int_{\mathbb{R}} e^{ipx}\delta(p^\alpha-E_\alpha)dp.
\label{55}\en
If $\alpha=1$, this formula easily produces $f_1(x)=\frac{N_1}{\sqrt{2\pi}}e^{iE_1x}$, which is a plane wave. Of course, $E_1$ can be any real number. Hence, the standard result for the {\em integer} momentum operator is recovered. Incidentally we observe that, if $\alpha=0$, $P_0=\1$, and therefore every non zero function (or distribution) is an eigenstate of $P_0$, with eigenvalue one. Let us now see what happens if $\alpha=2$. In this case $P_2$ is a positive, self-adjoint, operator. Hence its eigenvalue $E_2$ must be non negative, $E_2\geq0$. Well known formulas for the delta-function imply that
$$
\hat f_2(p)=N_2\delta(p^2-E_2)=\frac{N_2}{2\sqrt{E_2}}\left(\delta(p-\sqrt{E_2})+\delta(p+\sqrt{E_2})\right), 
$$
so that $f_2(x)=F^{-1}[\hat f_2(p)]=\frac{N_2}{\sqrt{2\pi}}\frac{\cos(\sqrt{E_2}\,x)}{\sqrt{E_2}}$. It is clear that $f_2(x)\notin\Lc^2(\mathbb{R})$, and $E_2$ can be any non negative real number. It is not difficult to imagine how the situation goes if $\alpha=3,4,\ldots$.

Let us now go to fractional values of $\alpha$. In particular, it is easier to consider $\alpha$ of the following form: $\alpha=\frac{1}{2n+1}$, $n=0,1,2,3,\ldots$. In the rest of this section we will restrict to these values. Let $h(p)\in\Sc(\mathbb{R})$ be a test function. Then, with the change of variable $p\rightarrow q=p^\frac{1}{2n+1}$, we see that, for each $\gamma\in\mathbb{R}$,
$$
\int_{\mathbb R}\delta\left(p^\frac{1}{2n+1}-\gamma\right)h(p)dp=(2n+1)\gamma^{2n}\int_{\mathbb R}\delta\left(q-\gamma^{2n+1}\right)h(q)\,dq,
$$
and therefore $\delta\left(p^\frac{1}{2n+1}-\gamma\right)=(2n+1)\gamma^{2n}\delta\left(q-\gamma^{2n+1}\right)$. This means that
$$
\hat f_{\frac{1}{2n+1}}(p)=N_{\frac{1}{2n+1}}(2n+1)E_{\frac{1}{2n+1}}^{2n}\delta\left(p-E_{\frac{1}{2n+1}}^{2n+1}\right),
$$
so that
\be
f_{\frac{1}{2n+1}}(x)=F^{-1}\left[\hat f_{\frac{1}{2n+1}}(p)\right]=\frac{(2n+1)N_{\frac{1}{2n+1}}(2n+1)E_{\frac{1}{2n+1}}^{2n}}{\sqrt{2\pi}}\,\exp\left\{iE_{\frac{1}{2n+1}}^{2n}x\right\},
\label{56}\en
which is again a plane wave, for all possible $n$. Incidentally we observe that there is no constraint on $E_{\frac{1}{2n+1}}$, which could also be, in principle, complex. However, we do not expect this can happen. This is suggested (but not proved!) by equation (\ref{23b}), which implies that, for all $\alpha\geq0$,
\be
\left<P_\alpha g,f\right>=\left<g,P_\alpha f\right>,
\label{57}\en
for all $f(x),g(x)\in\Sc(\mathbb{R})$. Hence, at least thinking for a moment that the eigenstate $f_\alpha(x)$ of $P_\alpha$ belongs to $\Sc(\mathbb{R})$ (which, as we have seen, is not the case!),  we would get $\left<P_\alpha f_\alpha,f_\alpha\right>=\left<E_\alpha f_\alpha,f_\alpha\right>=\overline{E_\alpha}\,\|f_\alpha\|^2$, as well as  $\left< f_\alpha,P_\alpha f_\alpha\right>=\left< f_\alpha,E_\alpha f_\alpha\right>={E_\alpha}\,\|f_\alpha\|^2$. Hence extending (\ref{57}) to $f_\alpha$, $\left<P_\alpha f\alpha,f_\alpha\right>=\left<f_\alpha,P_\alpha f_\alpha\right>$ only if $E_\alpha\in\mathbb{R}$.

\section{Conclusions}\label{sect6}

We have investigated some properties of the fractional derivative defined via Fourier transform and distribution theory. We have given some examples, and considered a preliminary application to quantum mechanics.

Several interesting aspects  should be considered in a future analysis: is the range of $\alpha$ for which $D^\alpha$ can be defined strictly connected to the existence of $I_\alpha$? Is the role of $\Sc(\mathbb{R})$ so important? Can  definition (\ref{22}) be extended further? Also, from a more applicative side, an interesting question arises: is it possible, in analogy with what is done for ordinary bosonic operators, satisfying the canonical commutation relations, or for pseudo-bosonic operators, \cite{baginbagbook}, to set up an algebraic approach for the operators $A_\alpha$ and $B_\alpha$ in (\ref{54}) to produce an orthonormal basis of the Hilbert space where the operators act, or (maybe) two complete families of biorthogonal functions? This could be relevant in connection with particular deformed canonical commutation relations in PT-quantum mechanics. We hope to be able to consider this aspect in a future analysis.

\section*{Acknowledgements}

This work was partially supported by the University of Palermo and by the Gruppo Nazionale di Fisica Matematica of Indam.


\begin{thebibliography}{99}
	
\bibitem{hilfer}R. Hilfer, ed.,	{\em Applications of fractional calculus to physics}, World Scientific, Singapore (2000)

\bibitem{herr} R. Herrmann, {\em Fractional calculus: an Introduction For Physicists}, World Scientific, Singapore (2011)


\bibitem{lim}J. Klafter, S. C. Lim, R. Metzler Eds., {\em Fractional dynamics: recent advances}, World Scientific, Singapore (2012)

	\bibitem{laskinbook}, N. Laskin, {\em Fractional quantum mechanics}, World Scientific, Singapore (2018)
	
	\bibitem{namias} V. Namias, {\em The fractional order Fourier transform and its application to quantum mechanics}, J. Inst. Maths. Applics, {\bf 25}, 241-265 (1980)
	
	
	
	\bibitem{matos} A. Matos-Abiage, {\em Fractional dimensional momentum operator for a system of one degree of freedom}, Physica Scripta, {\bf 62}, 106-107 (2000)
	
	\bibitem{laskin1}	N. Laskin, {\em Fractional quantum mechanics}, Phys. Rev. E, {\bf 62}, 3135 (2000)
	
	\bibitem{laskin2}	N. Laskin, {\em Fractional Schr\"odinger equation}, Phys. Rev. E, {\bf 66}, 056108 (2002)
	
	\bibitem{laskin3}	N. Laskin, {\em Principles of fractional quantum mechanics}, in \cite{lim}, 393-428.
	
	\bibitem{wei1}	Y. Wei, {\em The infinite square well problem in the standard, fractional, and relativistic quantum 
	mechanics}, International Journal of Theoretical and Mathematical Physics 5 (2015), pp 58-65
	
	\bibitem{wei2}	Y. Wei, {\em Comment on “Fractional quantum mechanics” and “Fractional Schrödinger equation”}, Phys. Rev. E {\bf 00}, 006100 (2016)
	
	\bibitem{baginbagbook} F. Bagarello, {\em Deformed canonical (anti-)commutation relations and non hermitian hamiltonians}, in {Non-selfadjoint operators in quantum physics: Mathematical aspects}, F. Bagarello, J. P. Gazeau, F. H. Szafraniec and M. Znojil Eds., John Wiley and Sons Eds, Hoboken, New Jersey (2015)
	
	\bibitem{tseng} C.-C. Tseng, S.-C. Pei, S.-C. Hsia, {\em Computation of fractional derivatives using Fourier transform and digital FIR differentiator}, Signal Processing {\bf 80}, 151-159 (2000)
		
		\bibitem{vigue} P. Vigu\'e, {\em Fourier transform of Weyl fractional derivatives}, 2018, hal-01740424
	
	\bibitem{samko} S. G. Samko, A. A. Kilbas, O. I., Marichev, {\em Fractional integrals and derivatives. Theory and applications}, Gordon and Breach Science Publishers, Singapore (1993)
	
	
	\bibitem{brew}  R. A. Brewster, J. D. Franson, {\em Generalized delta functions and their use in quantum optics}, J. Math. Phys., {\bf 59}, 012102 (2018)
	
	\bibitem{lindell} I. V. Lindell, {\em Delta function expansions, complex delta functions and the steepest descent method}, Am. J. Phys., {\bf 61}, 438-42 (1993)
	
		\bibitem{smagin} V. A. Smagin, {\em Complex delta function and its information application}, Automatic Control and Computer Sciences,  {\bf 48}, No. 1, 10–16 (2014)
	
		\bibitem{adams} Robert A. Adams, John J. F. Fournier, {\em Sobolev spaces}, Elsevier, Oxford, (2003)
	
	
	\bibitem{bag1} F. Bagarello, {\em $kq$-representation for pseudo-bosons, and completeness of bi-coherent states},  Jour. Math. Anal. Appl., {\bf 450}, 631-643 (2017)
	
	
	
	\bibitem{bag2} F. Bagarello, F. Gargano, S. Spagnolo, S. Triolo, {\em Coordinate representation for non Hermitian position and momentum operators},   Proc. Roy. Soc. A, {\bf 473}, 20170434 (2017)
	




\end{thebibliography}
\end{document}